\def\beq{\begin{equation}}
\def\eeq{\end{equation}}
\begin{document}

\title{Critical line of the Ising model on 2-dimensional CDT and its dual}
\author{George M. Napolitano$^{a,\dag}$ and Tatyana S. Turova$^{a,b,\ddag}$}

\address{$^{a}$Centre for Mathematical Sciences, Lund University \\
S\"olvegatan 18, 22100, Lund, Sweden\\
$^\dag$E-mail: gmn@maths.lth.se\\
$^{\ddag}$E-mail: tatyana@maths.lth.se}

\address{$^{b}$IMPB, Russian Academy of Sciences \\ Pushchino, Moscow Region, Russia}

\begin{abstract}
In this paper we study the annealed coupling of an Ising model with 2-dimensional causal dynamical triangulation model. After a short review of previous results, we prove the existence of the so-called critical line and derive its analytical properties and asymptotics. In addition, relations between the model and its dual are investigated.
\end{abstract}

\keywords{Causal dynamical triangulation; Ising model; critical line; dual.}

\bodymatter

%%%%%%%%%%%%%%%%% now a standard article style for the most part

\section{Introduction}

The causal dynamical triangulation (CDT) model was first introduced  in Ref. \refcite{AL98} as a non-perturbative theory of quantum gravity. The basic idea behind this model is to interpret the gravitational path-integral  by discretizing spacetime manifolds through so-called \textit{causal triangulations}. Thus, the problem of computing the gravitational path-integral is replaced by a statistical mechanical problem, that is the study of the partition function of an ensemble of finite random triangulations, whose weight is given by the Boltzmann factor $e^{- S_\mu[T]}$, where $S_\mu[T]$ is the discrete Einstein-Hilbert action. Having described the partition function, one can study its ``thermodynamic'' limit, when the size of triangulations grows to infinity. We refer the reader to Ref. \refcite{ADJ-qg}, where the subject is treated in detail.

%In the discrete setting, the information on the metric of a given manifold $M$ would be encoded in a triangulation, and the sum over all possible (diffeomorphic invariant) Lorentzian metrics on $M$ would then translate into the sum over all possible causal triangulations. All in all, the Euclidean gravitational path-integral in the discrete theory would be of the form
%\begin{equation}
	%Z(\mu) = \sum_{T \in \mathcal{T}_M} e^{- S_\mu[T]},
%\label{eq:partfunc}
%\end{equation}
%where $S_\mu[T]$ is the discrete analog of the Einstein-Hilbert action, depending on a parameter $\mu$ (cosmological constant), and $\mathcal{T}_M$ is the set of all inequivalent causal triangulations of $M$. 

%However, the computation of the partition function (\ref{eq:partfunc}) in arbitrary dimensions presents non-trivial mathematical difficulties, mostly of combinatorial nature, so that nowadays most investigations on 3 and 4-dimensional CDT heavily rely on numerical simulations.

Even in two dimensions, the CDT model turns out to possess a rich mathematical structure. In particular, it has been proved (see Refs. \refcite{MYZ01,DJW3}) that it is critical at a certain (known) value $\mu_c$ of the cosmological constant. This critical behaviour is reflected in the geometrical properties (Hausdorff and spectral dimensions) of the infinite triangulations: they show 2-dimensional features at criticality, whereas in the subcritical regime they are essentially 1-dimensional. 

Matter is inserted in the theory by running  a statistical mechanical model on the random triangulation. Technically, this can be implemented by weighting the Boltzmann factor associated to each triangulation $T$ with the partition function of an Ising model running  on $T$. This is generally called \textit{annealed} coupling (see Ref. \refcite{ADJ-qg} for further details). 

%Note that the theory described by the partition function (\ref{eq:partfunc}) is a pure gravity theory, that is it does not take into account matter fields. The usual procedure to include matter into the theory (see Ref. \refcite{ADJ-qg}) is to  couple the random triangulations with a statistical mechanical model, such as an Ising model. In the present paper, we will investigate the so-called \textit{annealed} coupling, obtained by weighting the Boltzmann factor associated to each triangulation $T$ with the partition function of an Ising model running  on $T$. 
%
%In the last two decades, many rigorous mathematical results in discrete and continuous theories of 2-dimensional quantum gravity have been produced. Perhaps, the most well-known of such results are the construction of the uniform infinite planar triangulation in Ref. \refcite{AS03} and the so-called Brownian map in Ref. \refcite{LeGall}, that is the scaling limit of many models of random planar maps. Moreover, it is conjectured that the scaling limit of models of random planar maps, both with matter and without,  would be described by Liouville quantum gravity theories (see e.g. Ref. \refcite{DKRV14}).

As in the pure gravity case, the coupled model is believed to become critical when the parameters of the model (the cosmological constant $\mu$ and the inverse temperature $\beta$ of the spin system) are suitably tuned. %Furthermore, numerical simulations show that the critical exponents of the model are identical to those of the 2-dimensional square lattice (see Ref. \refcite{AAL}). 
However, the exact value of the critical parameters are still unknown and only bounds on the region where they lie have been provided (see Refs. \refcite{HSYZ} and \refcite{NT15}).

%Most notably, in the last two decades continuous and discrete theories of 2-dimensional quantum gravity, closely related to the CDT model, have been intensively studied and many rigorous mathematical results have been produced. Perhaps, the most well-known of such results are the construction of the uniform infinite planar triangulation \cite{} and the so-called Brownian map, the scaling limit of many models of random planar maps.  (see e.g. \cite{LeGall} and references therein)    (Kupiainen \cite{DKRV14}) (Garban?)

In this paper we review some recent results, first appeared in Ref. \refcite{NT15}, on the annealed coupling between the Ising model and the random causal triangulations of the plane. In particular, we show the existence of a critical line in the positive $(\beta,\mu)$ quarter plane,  give a description of some of its analytical properties and provide bounds on the region where the critical line is located. Furthermore, we add some new results on the relation between the critical line and its dual.

\section{Ising model on 2-dimensional CDT}

\begin{definition}\label{DM}
A \textit{causal triangulation} $T$ is a rooted planar locally finite connected graph satisfying the following properties.
\begin{enumerate}
	\item The set of vertices at graph distance $i$ from the root, together with the edges connecting them, form a cycle, denoted by $S_i=S_i(T)$ and called $i$-th \textit{slice} (when there is only one vertex the corresponding cycle has only one edge, i.e. it is a loop).
	\item All faces of the graph are triangles, with the only exception of the external face.
	\item One edge attached to the root vertex is marked, we call it \textit{root edge}. 
\end{enumerate}
\end{definition}

The last condition in the above definition is a technical requirement, needed to cancel out possible rotational symmetries around the root.
%Here a triangle is defined as a face with exactly 3 edges incident to it, with the convention that an edge incident to the same face on both sides counts twice. 
%We shall call $S_i$ the $i$-th \textit{slice} of the triangulation $T$. An example of such triangulation is showed in Fig. (\ref{fig}).

%\begin{figure}%
%\includegraphics[width=\columnwidth]{filename}%
%\caption{}%
%\label{}%
%\end{figure}

%The presence of a root edge allows us to unambiguously label the vertices of such triangulation. This can be done as follows. For each triangulation $T$ and for each $i\geq 1$ 
%let us enumerate the vertices of the set $S_i,$ i.e., all the vertices at distance $i$ from the root as follows.

%For any graph $G$ we shall denote here $V(G)$ and $E(G)$ the sets of vertices, and edges, correspondingly.

%Let $v_{1,1}$ be the endpoint of the root edge on $S_1$, and let us denote all the other vertices on $S_1$, taken in clockwise order starting from $v_{1,1}$, by $v_{1,2}$ up to $v_{1,|S_1|}$. Here $|S_i| \equiv |V(S_i)| = |E(S_i)|$, where for any set $A$ we denote $|A|$ its cardinality.
%
%Given $v_{i,1}, \ldots, v_{i,|S_i|}$ , 
%take the endpoint of the leftmost edge connecting 
%$v_{i,1}$ to $S_{i+1}$, this will be denoted by $v_{i+1,1}$,
 %and proceed as above for all the other vertices on $S_{i+1} $.

We denote by $\mathcal{T}_{N,l}$ the set of causal triangulations with $N$ slices and fixed number $l$ of vertices on $S_N$. In the following, we shall denote by $V(T)$, $E(T)$ and $F(T)$, the vertex set, the edge set and the triangles set of $T$, respectively. The cardinality of a set $A$ will be denote by $|A|$.

Given a triangulation $T \in \mathcal{T}_{N,l}$, a spin configuration 
$\sigma$ on $T$ is an element of the set
\begin{equation*}
	\Omega(T) = \{ \sigma \in \{ +1, -1\}^{V(T)} \}.
\label{eq:Omega}
\end{equation*}

The partition function of the coupled system (Ising model and random causal triangulations) is defined as
\begin{equation}\label{Z}
     Z_{N,l}(\beta,\mu) = \sum_{ T \in \mathcal{T}_{N,l}} \sum_{\sigma \in \Omega(T)} e^{-\beta H(T, \sigma)-\mu |F(T)|},
\end{equation}
where $\beta \geq 0$ is the inverse temperature, $\mu \geq 0$ is the cosmological constant and
\begin{equation}
	H(T,\sigma) = - \sum_{(u,v) \in E(T)} \sigma_u \sigma_v
\label{eq:ham}
\end{equation} 
is the classical nearest-neighbour spin interaction Hamiltonian.

\section{The critical line.}

Since the set $\mathcal{T}_{N,l}$ is countable, the series defined in (\ref{Z}) might diverge. On the other hand, for any fixed $\beta \geq 0$, each term of the series is a positive and strictly decreasing function of $\mu$, therefore there exists a value $\mu^{c}_{N,l}(\beta)$, such that $Z_{N,l}(\beta,\mu) < \infty$, for any $\mu > \mu^{c}_{N,l}(\beta)$, and $Z_{N,l}(\beta,\mu) = \infty$, for any $\mu < \mu^{c}_{N,l}(\beta)$.

It is easy to see that the sequence $\mu^{c}_{N,l}(\beta)$ is bounded from above. Indeed, from the inequality
\begin{equation*}
	H(\sigma,T) \geq - |E(T)|,
%\label{eq:}
\end{equation*}
it follows that
\begin{equation*}
	Z_{N,l}(\beta,\mu) \leq \sum_{ T \in \mathcal{T}_{N,l}} 2^{|V(T)|} e^{\beta|E(T)|-\mu |F(T)|},
%\label{eq:}
\end{equation*}
where the series on the right  is finite (see Ref. \refcite{NT15}) for all $N$ and $l$ whenever
 $\mu > \frac{3}{2} (\log 2 + \beta)$. Therefore
\begin{equation}
	\mu^{c}_{N,l}(\beta) \leq \frac{3}{2} (\log 2 + \beta).
\label{U}
\end{equation}

In the following theorem, we prove the existence of a critical value $\mu_{c}(\beta)$ for each  $\beta\geq 0$, above which the partition function is finite for any $N$ and $l$.

%Furthermore, we shall prove the existence of the critical value $\mu^{c}(\beta)$ for each  $\beta\geq 0$, which separates the values $\mu >\mu^{c}(\beta)$ where the partition function is finite for any $N$, and the values $\mu <\mu^{c}(\beta)$, where the partition function diverges for large $N$.

\begin{theorem}\label{Te1}
For each $\beta \geq 0$ and $l \in \mathbb N$, as $N\to\infty$
\begin{equation}
\mu^{c}_{N,l}(\beta) \nearrow \mu_{c}(\beta):=
\lim_{N\to\infty} \mu^{c}_{N,l}(\beta) ,
\label{t1}
\end{equation}
where  the limit exists and does not depend on $l$. Furthermore, the function $ \mu_{c}(\beta)$ is non-decreasing in $\beta$. 
\end{theorem}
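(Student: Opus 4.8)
The plan is to deduce all three assertions from a single \emph{gluing inequality}: for every $l,l'\in\mathbb N$ there is a constant $C=C(l,l',\beta,\mu)>0$ with
\begin{equation*}
Z_{N+1,l'}(\beta,\mu)\ \ge\ C\,Z_{N,l}(\beta,\mu),\qquad N\in\mathbb N .
\end{equation*}
To prove it, fix once and for all one causal strip $\Delta=\Delta_{l,l'}$ whose bottom cycle has length $l$ and whose top cycle has length $l'$. Given $T\in\mathcal T_{N,l}$, glue $\Delta$ on top of $T$ by identifying its bottom cycle with the last slice $S_N(T)$; the result $\iota(T)$ lies in $\mathcal T_{N+1,l'}$, contains $T$ as a subgraph, and the numbers of triangles, edges and vertices it adds to $T$ are some $a(l,l')$, $b(l,l')$, $c(l,l')$ depending only on $l$ and $l'$ — not on $N$ or $T$. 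Choosing the cyclic identification by a fixed rule makes $\iota$ injective (the root edge sits on $S_0$, so no new rotational symmetry is created and $T$ is recovered from $\iota(T)$ by deleting the top strip). Writing a spin configuration on $\iota(T)$ as $(\sigma,\tau)$ with $\sigma\in\Omega(T)$ and $\tau\in\{\pm1\}^{c(l,l')}$, one has $H(\iota(T),(\sigma,\tau))\le H(T,\sigma)+b(l,l')$ and $|F(\iota(T))|=|F(T)|+a(l,l')$; summing over $\tau$, then over $T\in\mathcal T_{N,l}$ and using injectivity of $\iota$ yields the inequality with $C=2^{c(l,l')}e^{-\beta b(l,l')-\mu a(l,l')}$.

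Granting the gluing inequality, the rest is short. Taking $l'=l$: $Z_{N,l}(\beta,\mu)=\infty$ forces $Z_{N+1,l}(\beta,\mu)=\infty$, i.e. $\mu^{c}_{N,l}(\beta)\le\mu^{c}_{N+1,l}(\beta)$, so this sequence is non-decreasing in $N$; by the a priori bound (\ref{U}) it is bounded above, hence converges, which is (\ref{t1}). For independence of the limit on $l$, the inequality with a general $l'$ gives $\mu^{c}_{N,l}(\beta)\le\mu^{c}_{N+1,l'}(\beta)$ for every $N$; letting $N\to\infty$ yields $\lim_N\mu^{c}_{N,l}(\beta)\le\lim_N\mu^{c}_{N,l'}(\beta)$, and swapping $l$ and $l'$ gives equality.

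For monotonicity in $\beta$, fix $T$ and set $\varphi_T(\beta)=\sum_{\sigma\in\Omega(T)}e^{-\beta H(T,\sigma)}$, a finite, strictly positive, smooth function. Then $\frac{d}{d\beta}\log\varphi_T(\beta)=\langle -H\rangle_{\beta}$ and $\frac{d^{2}}{d\beta^{2}}\log\varphi_T(\beta)=\mathrm{Var}_{\beta}(H)\ge0$, with $\langle\cdot\rangle_\beta$ the Gibbs average on $T$. Since $\langle -H\rangle_{0}=2^{-|V(T)|}\sum_{\sigma}\sum_{(u,v)\in E(T)}\sigma_u\sigma_v\ge0$ (every non-loop edge contributes $0$), convexity forces $\frac{d}{d\beta}\log\varphi_T(\beta)\ge0$ for all $\beta\ge0$, so $\varphi_T$ is non-decreasing on $[0,\infty)$. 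Hence $Z_{N,l}(\beta,\mu)=\sum_{T\in\mathcal T_{N,l}}e^{-\mu|F(T)|}\varphi_T(\beta)$ is non-decreasing in $\beta$, so $Z_{N,l}(\beta_1,\mu)=\infty$ implies $Z_{N,l}(\beta_2,\mu)=\infty$ whenever $0\le\beta_1\le\beta_2$; therefore $\mu^{c}_{N,l}(\beta_1)\le\mu^{c}_{N,l}(\beta_2)$, and letting $N\to\infty$ shows $\mu_c(\beta)$ is non-decreasing.

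The step I expect to be the real work is the gluing inequality: one must pick $\Delta_{l,l'}$ so that $T$ embeds in $\iota(T)$ as a genuine subgraph (so the spin sum factorizes cleanly), verify that the increments $a,b,c$ are genuinely independent of $N$ and of $T$, and check that $\iota(T)$ satisfies every clause of Definition \ref{DM}, in particular pinning down the cyclic identification of $\Delta$'s bottom cycle with $S_N(T)$ so that $\iota$ is injective (or, at worst, boundedly many-to-one, which is equally good). Everything else is routine manipulation of series with non-negative terms.
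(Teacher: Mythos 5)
Your proposal is correct and follows the same overall strategy as the paper: a strip-gluing lower bound relating $Z_{N+1,l'}$ to $Z_{N,l}$, which together with the a priori bound (\ref{U}) gives both monotonicity in $N$ and independence of the limit on $l$, plus termwise monotonicity of the partition function in $\beta$. The two key inequalities are handled with minor variations. For the gluing step, the paper decomposes a triangulation of $\mathcal{T}_{N,k}$ into one of $\mathcal{T}_{N-1,l}$ together with its last strip and bounds the strip's contribution by $(2e^{-\beta-\mu})^k e^{-\mu l}$, arriving at $Z_{N,k}\geq (2e^{-\beta-\mu})^k\sum_l e^{-\mu l}Z_{N-1,l}$; you instead fix a single strip $\Delta_{l,l'}$ and track all three increments $a,b,c$, which is, if anything, slightly more careful about the transversal edges between the two slices, and yields the same conclusion $\mu^c_{N,l}(\beta)\leq\mu^c_{N+1,l'}(\beta)$. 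For monotonicity in $\beta$, the paper differentiates $Z_{N,l}$ and invokes the first Griffiths inequality to get $\sum_{\sigma}\sigma_u\sigma_v e^{-\beta H(T,\sigma)}\geq 0$; your convexity argument ($\log\varphi_T$ convex with non-negative derivative at $\beta=0$) is an elementary, self-contained substitute proving the same termwise monotonicity, and by arguing term by term it also avoids having to assume $Z_{N,l}<\infty$ before differentiating. Neither variation changes the substance of the proof.
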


\begin{proof}
First we show that for each $\beta \geq 0$ and $l \in \mathbb N$ function $\mu^{c}_{N,l}(\beta)$
is non-decreasing in $N$. 

Consider the partition function in (\ref{Z}) for the triangulations with a fixed last slice $S_N$ with 
$k$ vertices. It is straightforward to derive the following bound
\begin{equation*}\label{Z1}
\begin{split}
     Z_{N,k}(\beta,\mu) & \geq  \sum_l \sum_{ T \in \mathcal{T}_{N-1,l}} \sum_{\sigma \in \Omega(T)} e^{-\beta H(T, \sigma)-\mu |F(T)|}\\
			& \times e^{-\mu (k+l)}\sum_{\sigma' \in \{-1, +1\}^k} e^{-\beta \sum_{(i,j)\in S_N}\sigma'_{i}\sigma'_{j}} \\
			& = \sum_l  e^{-\mu (k+l)}Z_{N-1,l}(\beta,\mu) \left( \sum_{\sigma' \in \{-1, +1\}^k} e^{-\beta \sum_{(i,j)\in S_N}\sigma'_{i}\sigma'_{j}}\right) \\
			& \geq  \left(2e^{-\beta-\mu}\right)^k \sum_l  e^{-\mu l}Z_{N-1,l}(\beta,\mu) .
\end{split}
\end{equation*}

Hence, if $Z_{N-1,l}(\beta,\mu) =\infty$ for some $l$ then also  $Z_{N,k}(\beta,\mu) =\infty$.
This implies for any $k,l,N$
\begin{equation}\label{Z3}
\mu_{N-1,l}^{c}(\beta)\leq \mu_{N,k}^{c}(\beta).
\end{equation}
When $k=l$ this  together with the uniform bound (\ref{U}) 
yields existence of $$\lim_{N\rightarrow\infty} \mu_{N,k}^{c}(\beta)=:\mu_{k}^{c}(\beta),$$ and then also by (\ref{Z3}) for any fixed $k,l$
\[
\mu_{l}^{c}(\beta)\leq \mu_{k}^{c}(\beta),
\]
as well as 
\[
\mu_{k}^{c}(\beta) \leq \mu_{l}^{c}(\beta).
\]
We conclude that for any fixed $k$ there exists $\mu_{c}(\beta)$
such that 
\[\lim_{N\rightarrow\infty} \mu_{N,k}^{c}(\beta)=\mu_{c}(\beta).\]
This proves statement (\ref{t1}).

To prove the monotonicity of $\mu_{c}(\beta)$ consider again $Z_{N,k}(\beta,\mu)$.
Assume, $Z_{N,k}(\beta,\mu)<\infty.$ Then 
\begin{equation*}
   \frac{\partial }{\partial \beta}  Z_{N,l}(\beta,\mu) = 
\sum_{ T \in \mathcal{T}_{N,l}} e^{-\mu |F(T)|}
\sum_{(u,v)\in E(T)}
\sum_{\sigma \in \Omega(T)} \sigma_u\sigma_v e^{-\beta H(T, \sigma)}.
\end{equation*}
But  for any $T \in \mathcal{T}_{N,l}$
 and $(u,v)\in E(T)$
\[\sum_{\sigma \in \Omega(T)} \sigma_u\sigma_v e^{-\beta H(T, \sigma)}\geq 0\]
by the First Griffiths inequality for ferromagnetic spin systems. Hence, 
\[\frac{\partial }{\partial \beta}  Z_{N,l}(\beta,\mu) \geq 0, \]
i.e.,  for any $\beta<\beta'$
\[Z_{N,l}(\beta,\mu) \leq Z_{N,l}(\beta',\mu), \]
which in turn yields 
\[ \mu_{N,l}^{c}(\beta)\leq \mu_{N,l}^{c}(\beta').\]
Then monotonicity of $\mu_{c}(\beta)$ follows from here by the statement (\ref{t1}).
\end{proof}

Since the function $\mu_{c}(\beta)$ is non-negative and bounded from above by a linear function (see  (\ref{U})),
its monotonicity implies that it has at most a countable number of jumps. Therefore 
we shall call the graph of $\mu_{c}(\beta)$, $\beta \geq 0$, \textit{the critical line}. 

Theorem \ref{Te1} implies 
that $\mu_{c}(\beta)= \sup_{N,k} \mu_{N,k}^{c}(\beta)$, and therefore 
by a  theorem  of Ref. \refcite{NT15} proved for $\sup_{N,k} \mu_{N,k}^{c}$, we 
immediately have the following result.

\begin{theorem} \label{thm:cr_bounds}
The critical line $\mu_{c}(\beta)$ is  monotone non-decreasing, it has at most a countable number of jumps and it is contained in the region of the $(\beta,\mu)$-plane defined by
\begin{equation*}
	\Delta = \{ (\beta,\mu) \in \mathbb{R}^2 : l(\beta) \leq \mu \leq \frac{3}{2}\log 2 + \beta+ \frac{1}{2} \log\cosh\beta\},
%\label{eq:}
\end{equation*}
where
\begin{equation*}
	l(\beta) = \max\{\frac{3}{2}\log 2 + \frac{3}{2}\log\cosh\beta, \log 2 +\frac{3}{2} \beta \}.
%\label{eq:}
\end{equation*}
\end{theorem}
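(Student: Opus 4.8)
The plan is to read off every assertion from Theorem~\ref{Te1} together with the quantitative bounds on $\sup_{N,k}\mu^{c}_{N,k}$ established in Ref.~\refcite{NT15}; no essentially new argument is needed. The soft part is immediate: Theorem~\ref{Te1} already gives that $\mu_{c}(\beta)$ is non-decreasing in $\beta$, and by (\ref{U}) it is dominated by the affine map $\beta\mapsto\frac{3}{2}(\log 2+\beta)$, hence locally bounded, so the classical fact that a monotone function has at most countably many discontinuities yields the ``at most countably many jumps'' statement and legitimizes the term \emph{critical line}. Moreover, since Theorem~\ref{Te1} asserts $\mu^{c}_{N,l}(\beta)\nearrow\mu_{c}(\beta)$ as $N\to\infty$ with a limit independent of $l$, the supremum of $\mu^{c}_{N,k}(\beta)$ over \emph{all} pairs $(N,k)$ coincides with this common limit, i.e.\ $\mu_{c}(\beta)=\sup_{N,k}\mu^{c}_{N,k}(\beta)$. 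The inclusion $\mu_{c}(\beta)\in\Delta$ is then literally the statement proved for $\sup_{N,k}\mu^{c}_{N,k}$ in Ref.~\refcite{NT15}.

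For completeness I would reproduce the two estimates cutting out $\Delta$. For the \textbf{upper} edge one sharpens the crude bound behind (\ref{U}). Summing the spin variables slice by slice starting from the root, each non-root vertex $v$ has at least one already-assigned (``downward'') neighbour; summing over $\sigma_{v}$ produces a factor $2\cosh(\beta m)\le 2\cosh(\beta d_{v})$, where $d_{v}\ge 1$ is the number of earlier neighbours of $v$ and $|m|\le d_{v}$. Using the elementary inequality $\cosh(\beta d)\le e^{\beta(d-1)}\cosh\beta$ for $d\ge 1$, together with $\sum_{v}d_{v}=|E(T)|$ and the Euler relation $|E(T)|-|V(T)|+1=|F(T)|$, one gets the clean bound
\[ \sum_{\sigma\in\Omega(T)}e^{-\beta H(T,\sigma)}\ \le\ 2\,(2\cosh\beta)^{|V(T)|-1}\,e^{\beta|F(T)|}, \]
which is exact at $\beta=0$ and reproduces the low-temperature behaviour as $\beta\to\infty$. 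Inserting this into (\ref{Z}) and using the causal-triangulation identity $2|V(T)|-|F(T)|=l+1$ (valid for every $T\in\mathcal{T}_{N,l}$), the geometry collapses to an effective weight $[(2\cosh\beta)^{1/2}e^{\beta-\mu}]^{|F(T)|}$ per triangle; resumming the remaining pure causal-triangulation series slice by slice and invoking the classical threshold $g_{c}=\frac{1}{2}$ for the pure CDT partition function (Refs.~\refcite{MYZ01,DJW3}) shows it is bounded uniformly in $N$ and $l$ as soon as $(2\cosh\beta)^{1/2}e^{\beta-\mu}<\frac{1}{2}$, i.e.\ $\mu>\frac{3}{2}\log 2+\beta+\frac{1}{2}\log\cosh\beta$. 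For the \textbf{lower} edge one discards terms in the spin sum: keeping only the fully aligned configuration gives $\sum_{\sigma}e^{-\beta H(T,\sigma)}\ge e^{\beta|E(T)|}$ and effective per-triangle weight $e^{3\beta/2}e^{-\mu}$, while keeping only the empty term of the high-temperature expansion gives $\sum_{\sigma}e^{-\beta H(T,\sigma)}\ge 2^{|V(T)|}(\cosh\beta)^{|E(T)|}$ and effective weight $2^{1/2}(\cosh\beta)^{3/2}e^{-\mu}$; whenever either weight reaches $g_{c}=\frac{1}{2}$ the corresponding series diverges as $N\to\infty$, forcing $\mu\le\mu_{c}(\beta)$, hence $\mu_{c}(\beta)\ge l(\beta)=\max\{\log 2+\frac{3}{2}\beta,\ \frac{3}{2}\log 2+\frac{3}{2}\log\cosh\beta\}$.

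The delicate step --- and the one I expect to be the main obstacle --- is not the algebra above but the analytic input it rests on: that the pure causal-triangulation sum $\sum_{T\in\mathcal{T}_{N,l}}g^{|F(T)|}$ has its threshold \emph{exactly} at $g=\frac{1}{2}$, uniformly in $N$ and $l$. This requires the transfer-operator (generating-function) analysis of causal triangulations on an appropriate weighted sequence space together with sharp control of its norm, so that summability turns on precisely at the stated values of $\mu$; a naive strip-by-strip count loses an exponential factor and yields only a weaker bound. Since this is exactly the content of the cited theorem of Ref.~\refcite{NT15}, in the present paper the whole proof reduces, as indicated, to combining that result with Theorem~\ref{Te1}.
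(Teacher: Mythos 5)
Your proposal is correct and follows essentially the same route as the paper, which likewise obtains this theorem by combining Theorem~\ref{Te1} (monotonicity, hence at most countably many jumps, and the identification $\mu_{c}(\beta)=\sup_{N,k}\mu^{c}_{N,k}(\beta)$) with the bound of Ref.~\refcite{NT15} stated for $\sup_{N,k}\mu^{c}_{N,k}$. Your supplementary sketch of the estimates cutting out $\Delta$ (the slice-by-slice spin summation, the identities $2|V(T)|-|F(T)|=l+1$ and $|E(T)|=\tfrac{3}{2}|F(T)|+\tfrac{l-1}{2}$, and the pure-CDT threshold $g_{c}=\tfrac12$) goes beyond what the paper records but is consistent with the cited result.
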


Above the critical line, that is in the region
\begin{equation*}
	\Sigma = \{ (\beta,\mu) \in \mathbb{R}^2 : \beta \geq 0, \mu > \mu_{c}(\beta)\},
%\label{eq:}
\end{equation*}
the partition function is finite for all $N$ (and $l$). We call this \textit{subcritical region}. On the other hand, below the critical line, for each $(\beta,\mu)$, $Z_N(\beta,\mu) = \infty$ for $N$ large enough.

Furthermore, from Thm. \ref{thm:cr_bounds} it follows that the asymptotic behaviour of $\mu_{c}(\beta)$ is given by
\begin{gather}
	\mu_{c}(\beta) = \frac{3}{2} \log 2 + O(\beta), \quad \text{ as } \beta \to 0, \\
	\mu_{c}(\beta) = \log 2 + \frac{3}{2} \beta + O(e^{-2\beta}), \quad \text{ as } \beta \to \infty \label{eq:crl_inf}.
\end{gather}

In particular, eq. (\ref{eq:crl_inf}) shows that the ground-state spin configurations are dominant at low temperature, as expected (see Ref. \refcite{HSYZ}).

\section{Duality}
In this section we compare the partition function (\ref{Z}), to the partition function of the system constructed by placing the Ising spins on the faces of a causal triangulation, that is on the vertices of the dual graph $T^*$ of a causal triangulation $T$.

%Given a finite triangulation $T$, its dual graph $T^*$ has exactly one vertex on  each triangle face of $T$ and one on the  outer face, denoted by $v_O$;  each of 
%the edges of  $T^*$ intersects one and only one edge of graph $T$, and every edge of $T$ is intersected by one edge of $T^*$

Hence, the partition function of the dual model is defined as
\begin{equation*}
Z_{N,l}^{\Delta}(\beta,\mu) = \sum_{ T \in \mathcal{T}_{N,l}}  \sum_{\sigma \in \Omega(T^*)} e^{- \beta  H(T^*, \sigma)-\mu |F(T)|}.
\end{equation*}

Applying a Kramers-Wannier duality argument, it can be shown (see Ref. \refcite{NT15}) that  
the partition functions of the two systems are related by the equation
\begin{equation}%\label{e4}
	Z_{N,l}^{\Delta}(\beta^*,\mu^*) = \left(2\sinh{2\beta} \right)^{ -\frac{l}{4}} 	Z_{N,l}(\beta,\mu),
\label{eq:ZduZ}
\end{equation}
where $\beta^*=\beta^*(\beta)$ and $\mu^*=\mu^*(\mu,\beta)$ are such that
\begin{gather}
	\tanh \beta^* = e^{- 2 \beta} , \label{eq:tempdual} \\
	\mu^* = \mu + \frac{1}{4} \log 2 -\frac{3}{4}\log{(\sinh{2\beta})} \label{eq:cosmdual}.
\end{gather}

Equation (\ref{eq:ZduZ}) implies that the subcritical region of the original model is bijectively mapped onto the subcritical region of the dual model by the transformations (\ref{eq:tempdual})-(\ref{eq:cosmdual}). Hence, in particular the critical lines of the two models are related by the equation
\begin{equation}
	\mu^*_{c}(\beta) = \mu_{c}(\beta) + \frac{1}{4} \log 2 -\frac{3}{4}\log{(\sinh{2\beta})}.
\label{eq:mucrdual_mucr}
\end{equation}   

By the relation (\ref{eq:mucrdual_mucr}), Thm. \ref{thm:cr_bounds} implies the following.

\begin{corollary}
The dual critical line $\mu^*_{c}(\beta^*)$ is  monotone non-decreasing, it has at most a countable number of jumps and it is contained in the region of the $(\beta^*,\mu^*)$-plane defined by

\begin{equation*}
	\Delta^* = \{ (\beta^*,\mu^*) \in \mathbb{R}^2 : l^*(\beta^*) \leq \mu^* \leq 2 \log 2 + \frac{1}{2} \beta^* + \log\cosh\beta^*\},
%\label{eq:}
\end{equation*}
where
\begin{equation*}
	l^*(\beta^*) = \max\{2 \log 2 + \frac{3}{2}\log\cosh\beta^*, \log 2 +\frac{3}{2} \beta^* \}.
%\label{eq:}
\end{equation*}
\end{corollary}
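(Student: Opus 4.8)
The plan is to read off the Corollary from Theorem \ref{thm:cr_bounds} by pushing the two bounds there through the affine identity (\ref{eq:mucrdual_mucr}) and re-expressing the outcome in the dual temperature $\beta^*$ with the Kramers--Wannier relations encoded in (\ref{eq:tempdual}). Starting from $\tanh\beta^* = e^{-2\beta}$ and doing routine hyperbolic algebra, one obtains
\[
	\sinh 2\beta\,\sinh 2\beta^* = 1, \qquad \beta - \frac12\log\sinh 2\beta = \frac12\log 2 + \log\cosh\beta^*,
\]
together with the companion relations got by swapping $\beta$ and $\beta^*$; equivalently $\log\sinh 2\beta = -\log\sinh 2\beta^*$ and $\frac12\log\cosh\beta = \frac12\beta^* - \frac14\log 2 - \frac14\log\sinh 2\beta^*$. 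These are exactly the substitutions the computation calls for.

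For the upper bound, Theorem \ref{thm:cr_bounds} together with (\ref{eq:mucrdual_mucr}) gives
\[
	\mu^*_c(\beta^*) \leq \frac32\log 2 + \beta + \frac12\log\cosh\beta + \frac14\log 2 - \frac34\log\sinh 2\beta,
\]
and inserting the identities above collapses the right-hand side to $2\log 2 + \frac12\beta^* + \log\cosh\beta^*$. For the lower bound, since (\ref{eq:mucrdual_mucr}) adds the \emph{same} $\beta$-dependent quantity to both entries of the maximum defining $l(\beta)$, that maximum is preserved under the transformation; working out the two entries separately, $\frac32\log 2 + \frac32\log\cosh\beta$ turns into $\log 2 + \frac32\beta^*$ while $\log 2 + \frac32\beta$ turns into $2\log 2 + \frac32\log\cosh\beta^*$, so that $l(\beta) + \frac14\log 2 - \frac34\log\sinh 2\beta$ is precisely $l^*(\beta^*)$. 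Hence the dual critical line lies in $\Delta^*$.

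The monotonicity and the countability of the jumps deserve a separate remark, because they are the one point that does \emph{not} follow by mere substitution: the map $\beta\mapsto\beta^*$ in (\ref{eq:tempdual}) is orientation-reversing, so non-decreasingness of $\mu_c$ in $\beta$ does not transfer automatically. Instead one uses that the dual system is itself a ferromagnetic Ising model, now living on the fixed graph $T^*$ with the purely geometric weight $e^{-\mu^*|F(T)|}$; differentiating $Z^\Delta_{N,l}(\beta^*,\mu^*)$ term by term and applying the first Griffiths inequality exactly as in the proof of Theorem \ref{Te1} gives $\partial_{\beta^*}Z^\Delta_{N,l}(\beta^*,\mu^*)\geq 0$ wherever the series converges, so that finiteness at $(\beta^*,\mu^*)$ persists for all smaller $\beta^*$. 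Using (\ref{eq:ZduZ}) to identify the $(N,l)$-threshold of the dual model with $\mu^{c}_{N,l}(\beta) + \frac14\log 2 - \frac34\log\sinh 2\beta$ and then letting $N\to\infty$ by Theorem \ref{Te1}, one concludes that $\mu^*_c$ is non-decreasing in $\beta^*$; being moreover bounded (it is caught inside $\Delta^*$), it can have at most countably many discontinuities. The only genuine work is the hyperbolic bookkeeping of the second paragraph — in particular checking that the two branches of the maximum really do exchange roles — and keeping track of the orientation issue just mentioned.
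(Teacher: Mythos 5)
Your proof is correct and follows the route the paper intends: push the two bounds of Theorem \ref{thm:cr_bounds} through the affine relation (\ref{eq:mucrdual_mucr}) and rewrite everything in the dual variable via $\tanh\beta^*=e^{-2\beta}$; your hyperbolic identities ($\sinh 2\beta\,\sinh 2\beta^*=1$, $\beta-\tfrac12\log\sinh 2\beta=\tfrac12\log 2+\log\cosh\beta^*$), the collapse of the upper bound to $2\log 2+\tfrac12\beta^*+\log\cosh\beta^*$, and the exchange of the two branches of the maximum all check out. Your separate treatment of monotonicity is in fact a genuine improvement on the paper's one-line ``implies'': since $\beta\mapsto\beta^*$ is orientation-reversing, non-decreasingness of $\mu^*_c$ in $\beta^*$ does not follow by mere substitution, and running the first Griffiths inequality directly on the ferromagnetic dual model (exactly as in the proof of Theorem \ref{Te1}) is precisely what is needed to close that gap.
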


In particular, we have the following asymptotic results
\begin{gather*}
	\mu^*_{c}(\beta^*) = 2 \log 2 + O(\beta^{*}), \quad \text{ as } \beta^* \to 0, \\
	\mu^*_{c}(\beta^*) = \log 2 + \frac{3}{2} \beta^* + O(e^{-2\beta^*}), \quad \text{ as } \beta^* \to \infty.
\end{gather*}

%First, by the results in Ref. \refcite{NT15} and Ref. \refcite{}, we have that at $\beta = 0$, $\mu_{c}(0) = \frac{3}{2} \log 2$ and, equivalently, $\mu^*_{c}(0) = 2 \log 2$.
%
%Then, as $\beta \to \infty$, by eq. (\ref{eq:tempdual}) we have $\beta^* \to 0$, therefore  $\mu^*_{c}(\beta^*) \to 2 \log 2$ and
%\begin{equation}
	%\mu_{c}(\beta) = \frac{7}{4} \log 2 + \frac{3}{4}\log{(\sinh{2\beta})} + o(1).
%\label{eq:}
%\end{equation}
%Similarly, one finds that
%\begin{equation}
	%\mu^*_{c}(\beta^*) = \frac{7}{4} \log 2 + \frac{3}{4}\log{(\sinh{2\beta^*})} + o(1),
%\label{eq:}
%\end{equation}
%as $\beta^* \to \infty$.

\section{Conclusions}
In this paper we presented some results on the annealed coupling of an Ising model and the causal dynamical triangulation model. We proved the existence of a critical line $\mu_c(\beta)$ delimiting the region of the $(\beta,\mu)$-plane where the partition function $Z_{N,l}(\beta,\mu)$ of the system is finite for all $N$ and $l$ and showed that the critical line is a monotone non-decreasing function of $\beta $ and has at most a countable number of jumps. Moreover, the critical line is proved to lie in a region of the plane shrinking to a straight line when $\beta \to \infty$. This gives in turn the exact asymptotic behaviour of the critical line and shows that  ground-state spin configurations are dominant at low temperature. In addition, we obtained exact relations between the partition functions of the model itself and its dual, as well as between their critical lines.

%\bigskip

%\section*{Acknowledgements} 

\end{document}